\newtheorem{theorem}{Theorem}
\newtheorem{Lemma}{Lemma}
\newtheorem{Definition}{Definition}
\newtheorem{Assumption}{Assumption}
\newcommand{\vect}[1]{\ensuremath{\boldsymbol{\mathrm{#1}}}}
\definecolor{arash}{rgb}{0.8,0.8,1}
\definecolor{seb}{rgb}{0.8,1,0.8}
\definecolor{seb2}{rgb}{0.5,.5,1}
\definecolor{try}{rgb}{0.99,0.93,0.0}
\begin{document}
\begin{frontmatter}
\title{Verification of Dissipativity and Evaluation of Storage Function
in Economic Nonlinear MPC using Q-Learning} 
\author[First]{Arash Bahari Kordabad} 
\author[First]{Sebastien Gros} 
\address[First]{Department of Engineering Cybernetics, Norwegian University of Science and Technology (NTNU), Trondheim, Norway (e-mail: \{Arash.b.kordabad, sebastien.gros\}@ntnu.no).}
\begin{abstract}                
In the Economic Nonlinear Model Predictive (ENMPC) context, closed-loop stability relates to the existence of a storage function satisfying a dissipation inequality. Finding the storage function is in general-- for nonlinear dynamics and cost-- challenging, and has attracted attentions recently. Q-Learning is a well-known Reinforcement Learning (RL) techniques that attempts to capture action-value functions based on the state-input transitions and stage cost of the system. In this paper, we present the use of the Q-Learning approach to obtain the storage function and verify the dissipativity for discrete-time systems subject to state-input constraints. We show that undiscounted Q-learning is able to capture the storage function for dissipative problems when the parameterization is rich enough. The efficiency of the proposed method will be illustrated in the different case studies.
\end{abstract}
\begin{keyword}
Economic Nonlinear Model Predictive Control, Reinforcement Learning, Q-learning, Dissipativity, Storage Function
\end{keyword}
\end{frontmatter}
\section{Introduction}
\par  Economic Nonlinear Model Predictive Control (ENMPC) optimizes an economic performance rather than penalizing deviations from a desired steady-state reference (\cite{rawlings2009optimizing}). Dissipativity is a fundamental concept in ENMPC, when the objective stage cost is not necessarily positive definite  (\cite{diehl2010lyapunov}). In contrast to tracking costs, an optimal economic policy may not lead to the closed-loop stability of the optimal steady-state point (\cite{amrit2011economic}). The dissipativity property allows one to equalize the ENMPC with a tracking MPC which has a well-established stability conditions. Unfortunately, it is not trivial to prove that a given problem is dissipative (\cite{pirkelmann2019approximate}). In order to prove it, one has to find a storage function that satisfies the dissipation inequality.~\cite{scherer2015linear} used Linear Matrix Inequality (LMI) techniques to compute the storage function for linear systems with a quadratic stage cost. A similar method is used by \cite{koch2020provably} to verify dissipativity properties based on noisy data for linear systems. Furthermore, the Sum-of-Squares (SOS) method is used by ~\cite{pirkelmann2019approximate} for polynomials dynamics and stage cost. \cite{zanon2016tracking} showed how tracking MPC schemes can approximate generic economic NMPC locally and provided conditions for the local existence of a storage function using a Semi-Definite Program (SDP). \cite{grune2016relation} showed that strict dissipativity with the addition of a suitable controllability property leads the turnpike property in optimal control. Roughly speaking, the turnpike property means that open-loop optimal trajectories spend \textit{most of the time} in a vicinity of the steady-state solution.
\par Reinforcement Learning (RL) offers tools for tackling Markov Decision Processes (MDP) without having an accurate knowledge of the probability distribution underlying the state transition (\cite{sutton2018reinforcement,bertsekas2019reinforcement}). RL seeks to optimize the parameters underlying a given policy in view of minimizing the expected sum of a given stage cost. Q-learning attempts to find the optimal policy by capturing the optimal action-value function underlying the MDP. Model Predictive Control (MPC) scheme used as a value function approximator, provides a formal framework to analyse the stability and feasibility of the closed-loop system (\cite{MPCbook}). \cite{gros2020reinforcement,gros2019data} showed that adjusting the MPC model, cost and constraints allows the MPC to capture the optimal policy for the system even if using an inaccurate model of the system. RL is proposed and recent research have focused on MPC-based approximation for RL (see e.g. ~\cite{koller2018learning,mannucci2017safe,gros2020reinforcement,bahari2021reinforcement,kordabad2021mpc,nejatbakhsh2021reinforcement}).
\par In this paper, we propose to use undiscounted Q-learning as a method to compute the storage function and verify the dissipativity conditions. We assume that the given problem is dissipative and replace the ENMPC by its equivalent tracking MPC-scheme (positive stage cost) with an additional storage function. We parametrized the storage function, stage cost and terminal cost in the tracking MPC-scheme, and use it as a function approximator in Q-learning based on the economic stage cost. Then Q-learning adjusts the parameters by capturing the optimal action-value function. Using different examples we show that this method can be used for more general, nonlinear dynamics and cost to deliver the storage function with high accuracy. 
\par The rest of the paper is structured as follows. Section \ref{sec:ENMPC} provides the ENMPC formulation and the dissipativity conditions and explains how one can parameterize the ENMPC-scheme. Section \ref{sec:Q:learning} details the Q-learning approach and constrained learning steps and discusses efficiency of the proposed method. Section \ref{sec:sim} illustrates the simulations results for the different case studies. Section \ref{sec:conc} delivers a conclusion.
\section{Economic Nonlinear MPC}\label{sec:ENMPC}
In this section we detail the ENMPC formulation and the dissipativity condition. Consider the following discrete time, constrained dynamic system:
\begin{align}\label{eq:dyn}
    \vect s_{k+1}= \vect f\left(\vect s_k,\vect a_k\right), \quad \vect h \left(\vect s_k,\vect a_k\right) \leq 0,
\end{align}
where $k=0,1,\ldots$ is the physical time index, $\vect s_k\in\mathbb{X}\subseteq\mathbb{R}^n$ is the state, $\vect a_k\in\mathbb{U}\subseteq\mathbb{R}^m$ is the input. Vector field $ \vect f: \mathbb{R}^{n+m}\rightarrow \mathbb{R}^n$ expresses the state transition. Function $\vect h$ is a mixed input-state constraint. The set of feasible state-input pairs is defined as follows:
\begin{align}
    \mathbb{Z}:=\{(\vect s,\vect a)\in \mathbb{X}\times \mathbb{U}|\vect f\left(\vect s,\vect a\right)\in\mathbb{X},\,\vect h \left(\vect s,\vect a\right) \leq 0 \}
\end{align}
The selected economic stage cost, denoted by $L:\mathbb{Z}\rightarrow \mathbb{R}$, is not necessarily positive definite. The following standard assumption is essential in ENMPC context, that we use in the rest of the paper.
\begin{Assumption}\label{assum1}
The set $\mathbb{Z}$ is non-empty and compact and the  cost $L(\cdot)$ and function $\vect f(\cdot)$ are continuous on $\mathbb{Z}$.
\end{Assumption}
The corresponding optimal steady-state pair $(\vect s_e,\vect a_e)$ is defined as follows:
\begin{align} \label{eq:steady}
    (\vect s_e,\vect a_e)=\mathrm{arg}\,\min_{s,a}\,\, \{ L(\vect s, \vect a) |\,\, \vect s=\vect f(\vect s, \vect a),\,\,\vect h(\vect s, \vect a)\leq 0\}
\end{align}  
We define a \emph{shifted} stage cost $\ell$, as follows:
\begin{align}\label{eq:shift:l}
    \ell(\vect s,\vect a)= L(\vect s, \vect a)-L(\vect s_e, \vect a_e).
\end{align}
Then $\ell(\vect s_e,\vect a_e)=0$. A deterministic policy $\vect\pi: \mathbb{X}\rightarrow \mathbb{U}$ maps the state space to the input space. We seek an optimal policy $\vect\pi^\star$ solution of the following infinite-horizon problem:
\begin{subequations}\label{eq:per}
\begin{align}
    V^\star(\vect s)=\min_{\vect\pi}&\quad \sum_{k=0}^{\infty} \ell(\vect x_k,\vect\pi(\vect x_k)) \label{eq:cost}\\
\mathrm{s.t.}&\quad \vect x_{k+1} =  \vect f\left(\vect x_{k},\vect\pi(\vect x_k)\right),\,\,\, \vect x_{0} = \vect s, \label{eq:cons1}\\
&\quad\vect h \left(\vect{x}_{k},\vect\pi(\vect x_k)\right) \leq 0, \label{eq:cons2}
\end{align}
\end{subequations}
where $V^\star$ is the optimal value function and the sequence $(\vect x_{k})_{k=1}^{\infty}$ is the state trajectory under policy $\vect \pi$ starting from an arbitrary state $\vect x_0=\vect s$. The optimal action-value function is defined as follows:
\begin{align}\label{eq:Q:optimal}
Q^\star (\vect s,\vect a)=\ell(\vect s,\vect a)+V^\star(\vect f(\vect s,\vect a))
\end{align}
In the following we define the concept of \emph{dissipativity}, which is crucial in the ENMPC context.
\begin{Definition}
The system \eqref{eq:dyn} with stage cost $L$ is strictly dissipative if there exists a continuous \emph{storage} function $\lambda: \mathbb{X}\rightarrow\mathbb{R}$ satisfying:
\begin{align}\label{eq:dissip:l}
\lambda (\vect f(\vect s, \vect a)) - \lambda (\vect s) \leq & -\rho(\|\vect s-\vect s_e\|)+ \ell(\vect s, \vect a)
\end{align}
for all $(\vect s, \vect a)\in \mathbb{Z}$ and some function $\rho(\cdot)\in \mathcal{K}_\infty$, where $\|\cdot\|$ indicates an Euclidean norm.
\end{Definition}
Note that adding a constant in the storage function does not invalidate \eqref{eq:dissip:l}. Hence, we can assume that $\lambda(\vect s_e)=0$ without loss of generality. If the storage function $\lambda$ exists, the \emph{rotated} stage cost $\bar\ell$ is defined as follows:
\begin{align}\label{eq:rotated:L}
    \bar \ell(\vect s, \vect a)&=\ell(\vect s, \vect a)+\lambda(\vect s)-
    \lambda(\vect f(\vect s, \vect a))
\end{align}
From \eqref{eq:rotated:L} and \eqref{eq:dissip:l}, we have:
\begin{align}\label{eq:l:bar}
    \rho(\|\vect s-\vect s_e\|) \leq \bar \ell(\vect s, \vect a) , \quad \bar\ell(\vect s_e, \vect a_e)=0
\end{align}
Under assumption \ref{assum1}, the cost and storage function will remain bounded over the set $\mathbb{Z}$. Using a telescoping sum, the cost \eqref{eq:cost} can be expressed based on the positive stage cost $\bar\ell$ as follows:
\begin{align}\label{eq:cost:2}
    &\sum_{k=0}^{\infty} \ell(\vect x_k,\vect\pi(\vect u_k))=\sum_{k=0}^{\infty} \bar\ell(\vect x_k,\vect\pi(\vect x_k))-\lambda(\vect x_k)+\lambda(\vect x_{k+1})=\nonumber\\
    &-\lambda(\vect x_0)+\lambda(\vect x_{\infty})+  \sum_{k=0}^{\infty} \bar\ell(\vect x_k,\vect\pi(\vect x_k))=\nonumber\\&\qquad\qquad\qquad\qquad\qquad\quad-\lambda(\vect s)+\sum_{k=0}^{\infty} \bar\ell(\vect x_k,\vect\pi(\vect x_k))
\end{align}
Note that a dissipative ENMPC is equal to a tracking MPC with the rotated stage cost $\bar\ell$ which is zero at the optimal steady-state and lower bounded by a $\mathcal{K}_{\infty}$ function (see \eqref{eq:l:bar}) and the closed-loop dynamics with optimal policy will be stable (see \cite{MPCbook}). Hence, the state trajectory converges to its optimal steady-state, i.e. $\lim_{k\rightarrow\infty} \vect x_{k}=\vect s_e$, and $\lambda(\vect x_{\infty})=\lambda(\vect s_e)=0$. Hence, \eqref{eq:per} can be written as follows:
\begin{subequations}\label{eq:per2}
\begin{align}
    V^\star(\vect s)=\min_{\vect\pi}&\quad-\lambda(\vect s)+\sum_{k=0}^{\infty} \bar\ell(\vect x_k,\vect\pi(\vect x_k)) \label{eq:cost:MPC} \\
\mathrm{s.t.}&\quad \eqref{eq:cons1}, \eqref{eq:cons2}
\end{align}
\end{subequations}
From \eqref{eq:l:bar}, the stage cost of \eqref{eq:per2} is lower bounded by a $\mathcal{K}_\infty$ function and the term $-\lambda(\vect s)$ in the cost \eqref{eq:cost:MPC} has no effect on the optimal policy, because it only depends on the current state $\vect s$. Hence, the tracking NMPC \eqref{eq:per2} delivers the same input/state solution and same policy as ENMPC\eqref{eq:per} and the closed-loop behaviour of \eqref{eq:per} and \eqref{eq:per2} are equivalent. 
Finding the storage function $\lambda(\vect s)$ that satisfies \eqref{eq:dissip:l} is not trivial. In order to evaluate the storage function, we use the following finite-horizon MPC-scheme parameterized by $\vect \theta$:
\begin{subequations}\label{eq:V}
\begin{align}
V_{\vect\theta} (\vect s)=\min_{\vect{u},\vect x}&\quad-\lambda_{\vect\theta}(\vect s)+T_{\vect\theta}(\vect x_{N})+\sum_{k=0}^{N-1} \hat\ell_{\vect\theta}(\vect x_{k},\vect u_{k})\label{eq:MPC:cost}\\
\mathrm{s.t.}&\quad \vect x_{k+1} = \vect f\left(\vect x_{k},\vect u_{k}\right),\,\,\, \vect x_{0} = \vect s, \label{eq:Dynamics:MPC}\\
&\quad\vect h \left(\vect{x}_{k},\vect{u}_{k}\right) \leq 0, \label{eq:SafetyConstraints:MPC}
\end{align}
\end{subequations}
where $\lambda_{\vect\theta}$ is the approximated storage function, $\hat\ell_{\vect\theta}$ is the stage cost and $T_{\vect\theta}$ is the terminal cost penalizing the finite-horizon problem. Vector $\vect x=[\vect x^\top_0,\ldots,\vect x^\top_N]^\top$ is the predicted state trajectory, $\vect u=[\vect u^\top_0,\ldots,\vect u^\top_N]^\top$ is the input profile. State $\vect s$ is the current state of the system and $N$ is the horizon length. Value function $V_{\vect\theta}$ estimates the optimal value
function $V^\star$. Using \eqref{eq:rotated:L}, the parametrized rotated cost $\bar \ell_{\vect\theta}$ is obtained as follows:
\begin{align}\label{eq:rotated:L:theta}
    \bar \ell_{\vect\theta}(\vect s, \vect a)=\ell(\vect s, \vect a)+\lambda_{\vect\theta}(\vect s)-
    \lambda_{\vect\theta}(\vect f(\vect s, \vect a))
\end{align}
Since we use a finite-horizon MPC-scheme \eqref{eq:V} to approximate the infinite-horizon problem \eqref{eq:per2}, in general $\hat \ell_{\vect\theta}$ should be selected differently than the rotated stage cost $\bar \ell_{\vect\theta}$. The parametrized rotated stage cost $\bar \ell_{\vect\theta}$ is directly obtained by the storage function using \eqref{eq:rotated:L:theta} but the terminal cost $T_{\vect\theta}$ and stage cost $\hat\ell_{\vect\theta}$ are selected as parametric functions. We force $\hat \ell_{\vect\theta}$ to be lower bounded by a $\mathcal{K}_\infty$ function in the Q-learning steps or by construction, but we do not impose any restriction on $\bar \ell_{\vect\theta}$ or the storage function $\lambda_{\vect\theta}$. If the parameterization is rich enough, Q-learning will be able to capture the optimal action-value function under some mild assumptions shown in \cite{zanon2020safe}. As a result, the stage cost, terminal cost and storage function can be adjusted to deliver the optimal action-value and optimal policy. As a result:
\begin{align}
    \rho(\|\vect s-\vect s_e\|) \leq \bar \ell_{\vect\theta}(\vect s, \vect a)
\end{align}
holds for some $\rho(.)\in \mathcal{K}_\infty$, and it implies the dissipativity inequality \eqref{eq:dissip:l}.

Next section details the conditions on the stage cost $\hat \ell_{\vect\theta}$ and discusses how $\vect\theta$ should be updated using undiscounted Q-learning to capture the storage function, optimal policy and optimal action-value function while respecting the possible constraints on the parameters $\vect\theta$.
\section{Evaluation of the Storage function using Q-learning}\label{sec:Q:learning}
\par Q-learning is a classical model-free RL algorithm that tries to capture the optimal action-value function via tuning the parameter vector $\vect\theta$. The MPC-based action-value function can be extracted as follows (see \cite{gros2019data}):
\begin{subequations}\label{eq:Q}
\begin{align}
Q_{\vect\theta} (\vect s,\vect a)=\min_{\vect{u},\vect x}&\quad \eqref{eq:MPC:cost}\\
\mathrm{s.t.}&\quad  \eqref{eq:Dynamics:MPC}, \eqref{eq:SafetyConstraints:MPC} \label{eq:cons1:Q}\\
&\quad \vect u_{0}=\vect a  \label{eq:cons2:Q}
\end{align}
\end{subequations}
The parameterized deterministic policy can be obtained as: 
 \begin{align}\label{eq:pi}
    \vect \pi_{\vect \theta}(\vect s)=\vect {\bar u}_{0}^{\star}(\vect s,\vect \theta),
\end{align}
where $\vect {\bar u}_{0}^{\star}(\vect s,\vect \theta)$ is the first element of ${\vect {\bar u}}^\star$, which is the input solution of the MPC scheme \eqref{eq:V}.
It is shown by~\cite{gros2019data} that the value function $V_{\vect\theta}$ in \eqref{eq:V}, the action-value function $Q_{\vect\theta}$ in \eqref{eq:Q} and the policy $\vect \pi_{\vect\theta}$ in \eqref{eq:pi} satisfy the fundamental Bellman equations. For an undiscounted problem, basic Q-learning uses the following update rule for the parameters $\vect\theta$ at state $\vect s_k$ (see e.g. ~\cite{sutton2018reinforcement}):
\begin{subequations}\label{eq:Qlearning}
	\begin{gather}
	\delta_{k}=\ell(\vect s_{k},\vect a_{k})+ V_{\vect \theta}(\vect s_{k+1})-Q_{\vect\theta}(\vect s_{k},\vect a_{k})\label{eq:TD1}\\
	\vect \theta_{k+1}= \vect \theta_{k}+\alpha\delta_{k}\nabla_{\vect \theta}Q_{\vect \theta}(\vect s_{k},\vect a_{k}) \label{Qlearning}
	\end{gather}
\end{subequations}
where the scalar $\alpha>0$ is the learning step-size, $\delta_k$ is labelled the Temporal-Difference (TD) error and
the input $\vect a_k$ is selected according to the corresponding parametric policy $\vect \pi_{\vect\theta}(\vect s_k)$.
\par Since the constraints in the MPC \eqref{eq:V} are independent of $\vect\theta$, the sensitivity $\nabla_{\vect\theta}Q_{\vect\theta}(\vect s)$ required in Eq.\eqref{Qlearning} only depends on the cost and it is given by (see ~\cite{gros2019data}):
\begin{gather}
    \nabla_{\vect\theta}Q_{\vect\theta}(\vect s,\vect a)=\nabla_{\vect \theta}\Phi_{\vect \theta}(\vect u^\star,\vect x^\star)
\end{gather}
where $\Phi_{\vect\theta}$ gathers the cost for \eqref{eq:MPC:cost} and $\vect x^\star, \vect u^\star$  is the solution of \eqref{eq:Q} for a given $\vect s, \vect a$ pair. 

In the following we made an assumption in parameterization that allows us to evaluate optimal action-value and the storage function. 
\begin{Assumption}\label{assum:rich}
We assume that the parametrization is rich enough. I.e. there exists $\vect\theta^\star$ such that:
\begin{subequations}\label{eq:rich}
\begin{align}
    \vect \pi_{\vect\theta^\star}(\vect s)&=\vect \pi^{\star}(\vect s),\quad V_{\vect\theta^\star}(\vect s)=V^{\star}(\vect s),\label{eq:rich:V}\\ Q_{\vect\theta^\star}(\vect s,\vect a)&=Q^{\star}(\vect s,\vect a)\label{eq:rich:Q}
\end{align}
\end{subequations}
\end{Assumption}
\begin{Lemma}
If the assumption \ref{assum:rich} holds, then $\vect \theta^\star$ is a fixed point of \eqref{eq:Qlearning}.
\end{Lemma}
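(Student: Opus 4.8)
The plan is to show that at $\vect\theta=\vect\theta^\star$ the Temporal-Difference error $\delta_k$ in \eqref{eq:TD1} vanishes for every realized transition, so that the update \eqref{Qlearning} leaves the parameter unchanged. Being a fixed point of \eqref{eq:Qlearning} means precisely that setting $\vect\theta_k=\vect\theta^\star$ yields $\vect\theta_{k+1}=\vect\theta^\star$; since $\alpha>0$, this follows as soon as $\delta_k=0$, irrespective of the sensitivity $\nabla_{\vect\theta}Q_{\vect\theta^\star}$.

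First I would substitute $\vect\theta=\vect\theta^\star$ into \eqref{eq:TD1} and invoke Assumption \ref{assum:rich}: equations \eqref{eq:rich:V} and \eqref{eq:rich:Q} let me replace the parametric value and action-value functions by their optimal counterparts, giving $\delta_k=\ell(\vect s_k,\vect a_k)+V^\star(\vect s_{k+1})-Q^\star(\vect s_k,\vect a_k)$. Next I would use that the system \eqref{eq:dyn} is deterministic, so the measured successor state satisfies $\vect s_{k+1}=\vect f(\vect s_k,\vect a_k)$. Substituting this into the definition \eqref{eq:Q:optimal} of the optimal action-value function gives $Q^\star(\vect s_k,\vect a_k)=\ell(\vect s_k,\vect a_k)+V^\star(\vect f(\vect s_k,\vect a_k))=\ell(\vect s_k,\vect a_k)+V^\star(\vect s_{k+1})$, which upon cancellation yields $\delta_k=0$. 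Feeding $\delta_k=0$ into \eqref{Qlearning} then gives $\vect\theta_{k+1}=\vect\theta_k=\vect\theta^\star$, establishing the claim.

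I expect the argument to be a direct algebraic verification rather than anything requiring a hard estimate; the only point deserving care is the transition identity $\vect s_{k+1}=\vect f(\vect s_k,\vect a_k)$. This is where determinism is essential: the Bellman consistency \eqref{eq:Q:optimal} between $Q^\star$ and $V^\star$ is stated along the true dynamics $\vect f$, so the cancellation is exact only because the sampled successor state coincides with $\vect f(\vect s_k,\vect a_k)$. In a stochastic setting the same manipulation would make $\delta_k$ zero only in expectation, and $\vect\theta^\star$ would be a fixed point of the expected update rather than of every single-sample update; I would flag this as the conceptual subtlety behind an otherwise short proof. Note also that the policy used to select $\vect a_k$ plays no role in the cancellation, since \eqref{eq:Q:optimal} holds for all admissible $(\vect s_k,\vect a_k)$ with $\vect s_{k+1}=\vect f(\vect s_k,\vect a_k)$, so the conclusion holds regardless of which input the parametric policy $\vect\pi_{\vect\theta^\star}$ proposes.
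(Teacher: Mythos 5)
Your proof is correct and follows essentially the same route as the paper's: substitute $\vect\theta^\star$, use Assumption \ref{assum:rich} to replace $V_{\vect\theta^\star}$ and $Q_{\vect\theta^\star}$ by $V^\star$ and $Q^\star$, and invoke the Bellman identity \eqref{eq:Q:optimal} to conclude $\delta=0$. Your explicit remarks on determinism (so that $\vect s_{k+1}=\vect f(\vect s_k,\vect a_k)$) and on the irrelevance of the behavior policy are correct clarifications that the paper leaves implicit.
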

\begin{proof}From \eqref{eq:Q:optimal} and \eqref{eq:rich}, we have:
\begin{align} 
   \delta&=\ell(\vect s,\vect a)+ V_{\vect \theta^\star}(\vect f(\vect s,\vect a))-Q_{\vect\theta^\star}(\vect s,\vect a)\nonumber\\&=\ell(\vect s,\vect a)+ V^\star(\vect f(\vect s,\vect a))-Q^\star(\vect s,\vect a)=0\qedhere
\end{align}
\end{proof}
Let us define $\Psi^{N}_{\vect\theta}(\vect s)$ and $\Lambda^N_{\vect\theta}(\vect s,\vect a)$ as follows:
\begin{subequations}\label{eq:phi:2}
\begin{align}
\Psi^N_{\vect\theta}(\vect s):=\min_{\vect{u},\vect x}&\quad T_{\vect\theta}(\vect x_{N})+\sum_{k=0}^{N-1} \hat\ell_{\vect\theta}(\vect x_{k},\vect u_{k})\nonumber\\
\mathrm{s.t.}& \quad  \eqref{eq:Dynamics:MPC}, \eqref{eq:SafetyConstraints:MPC}\label{eq:phi:2:1}\\
\Lambda^N_{\vect\theta}(\vect s,\vect a):=\min_{\vect{u},\vect x}&\quad T_{\vect\theta}(\vect x_{N})+\sum_{k=0}^{N-1} \hat\ell_{\vect\theta}(\vect x_{k},\vect u_{k})\nonumber\\
\mathrm{s.t.}& \quad  \eqref{eq:cons1:Q}, \eqref{eq:cons2:Q}\label{eq:phi:2:2}
\end{align}
\end{subequations}
One can observe that:
\begin{align}\label{eq:QV}
    \Lambda^N_{\vect\theta}(\vect s,\vect a)=\hat\ell_{\vect\theta}(\vect s,\vect a)+\Psi^{N-1}_{\vect\theta}(\vect f(\vect s,\vect a))
\end{align}
Before providing the main theorem that expresses how Q-learning can capture a valid storage function we make a basic stability assumption.
\begin{Assumption}\label{assum:T}
For all $\vect s\in \mathbb{X}$, there exists an input $\vect a$ (such that $(\vect s,\vect a)\in\mathbb{Z}$) satisfying:
\begin{align}\label{eq:T:assu}
 T_{\vect\theta}(\vect f(\vect s,\vect a))-T_{\vect\theta}(\vect s)\leq- \hat \ell_{\vect\theta}(\vect s,\vect a)   ,\quad \vect h(\vect s, \vect a)\leq 0
\end{align}
\end{Assumption}
Assumption \ref{assum:T} is a basic assumption in the MPC context that allows one to discuss about the stability. If there exists a terminal constraint in form $\vect x_N\in \mathbb{X}_f\subseteq \mathbb{X}$ in the MPC \eqref{eq:V}, then the assumption \ref{assum:T} will be limited to be satisfied for all $\vect s \in \mathbb{X}_f$. To satisfy that in the Q-learning context, one needs to provide a generic function approximator for the terminal cost $T_{\vect\theta}$. The following Lemma on the monotonicity property of the MPC value function is useful in the building of the theorem.
\begin{Lemma}
Under the assumptions \ref{assum1} and \ref{assum:T}, we have:
\begin{align}\label{eq:mono}
    \Psi^{N+1}_{\vect\theta}(\vect s) \leq \Psi^{N}_{\vect\theta}(\vect s) ,\quad \forall N\geq0
\end{align}
\end{Lemma}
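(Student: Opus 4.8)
The plan is to prove this by the standard receding-horizon monotonicity argument: I would build a \emph{suboptimal but feasible} candidate for the horizon-$(N{+}1)$ problem out of the optimal horizon-$N$ trajectory, and show that its cost does not exceed $\Psi^N_{\vect\theta}(\vect s)$. Since $\Psi^{N+1}_{\vect\theta}(\vect s)$ is by definition the minimum over all feasible candidates of \eqref{eq:phi:2:1}, this immediately yields the claimed inequality.

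First I would invoke Assumption \ref{assum1}: because $\mathbb{Z}$ is compact and $\hat\ell_{\vect\theta}$, $T_{\vect\theta}$, $\vect f$ are continuous, the feasible set of the horizon-$N$ problem is compact (Assumption \ref{assum:T} also guarantees it is nonempty, since from any $\vect s\in\mathbb{X}$ one can always select an input keeping the trajectory feasible), so the minimum defining $\Psi^N_{\vect\theta}(\vect s)$ is attained. Let $(\vect x^\star_0,\dots,\vect x^\star_N,\vect u^\star_0,\dots,\vect u^\star_{N-1})$ denote an optimal horizon-$N$ trajectory, with $\vect x^\star_0=\vect s$. I would then construct the horizon-$(N{+}1)$ candidate by keeping this trajectory on the first $N$ stages and appending a single extra step at the terminal state $\vect x^\star_N$: apply the input $\vect a$ guaranteed by Assumption \ref{assum:T} at $\vect s=\vect x^\star_N$, and set the new terminal state to $\vect f(\vect x^\star_N,\vect a)$.

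The key step is to verify feasibility of the candidate and then compare costs. Feasibility on the first $N$ stages is inherited from the optimal horizon-$N$ solution, and feasibility of the appended stage, i.e. $\vect h(\vect x^\star_N,\vect a)\le 0$ together with $\vect f(\vect x^\star_N,\vect a)\in\mathbb{X}$, is exactly what Assumption \ref{assum:T} supplies. For the cost, the candidate contributes the original running cost $\sum_{k=0}^{N-1}\hat\ell_{\vect\theta}(\vect x^\star_k,\vect u^\star_k)$ plus the added stage cost $\hat\ell_{\vect\theta}(\vect x^\star_N,\vect a)$ plus the new terminal cost $T_{\vect\theta}(\vect f(\vect x^\star_N,\vect a))$. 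Applying the terminal decrease inequality \eqref{eq:T:assu} at $\vect x^\star_N$ bounds the last two terms by $T_{\vect\theta}(\vect x^\star_N)$, so the total candidate cost is at most $T_{\vect\theta}(\vect x^\star_N)+\sum_{k=0}^{N-1}\hat\ell_{\vect\theta}(\vect x^\star_k,\vect u^\star_k)=\Psi^N_{\vect\theta}(\vect s)$. Since $\Psi^{N+1}_{\vect\theta}(\vect s)$ minimizes over feasible candidates, I conclude $\Psi^{N+1}_{\vect\theta}(\vect s)\le\Psi^N_{\vect\theta}(\vect s)$, which holds for every $N\ge 0$.

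I expect the argument to be essentially routine; the only point requiring care is the bookkeeping that makes the appended step simultaneously feasible and cost-decreasing — it is precisely the terminal control law of Assumption \ref{assum:T}, evaluated at the shifted terminal state $\vect x^\star_N$, that plays this dual role. If a terminal constraint $\vect x_N\in\mathbb{X}_f\subseteq\mathbb{X}$ were present in \eqref{eq:V}, the same construction would carry over provided $\vect x^\star_N\in\mathbb{X}_f$ and Assumption \ref{assum:T} keeps $\vect f(\vect x^\star_N,\vect a)\in\mathbb{X}_f$, which is the usual control-invariance requirement on the terminal set that the terminal-cost function approximator must enforce.
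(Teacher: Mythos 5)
Your proof is correct and is precisely the standard shift-and-append monotonicity argument that the paper itself does not write out but simply defers to its MPC textbook reference. The construction (extend the optimal horizon-$N$ trajectory by one step using the input from Assumption \ref{assum:T}, whose inequality \eqref{eq:T:assu} makes the appended stage both feasible and cost-nonincreasing) is exactly the intended argument, so there is nothing to add.
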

\begin{proof}
See \cite{MPCbook}.
\end{proof}
\begin{theorem}\label{theorem}
Assumption \eqref{assum1}-\eqref{assum:T}, for a dissipative problem with rich enough parameterization, the undiscounted Q-learning yields a valid storage function which satisfies \eqref{eq:dissip:l} if there exists a $\mathcal{K}_\infty$ functions $\alpha_1(.)$ satisfying
        \begin{align}
       \alpha_1(\|\vect s-\vect s_e\|)&\leq \hat \ell_{\vect\theta^\star}(\vect s,\vect a),\quad \forall (\vect s,\vect a)\in\mathbb{Z}
    \end{align}
\end{theorem}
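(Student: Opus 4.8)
The plan is to show that the parametrized rotated stage cost $\bar\ell_{\vect\theta^\star}$ induced by the learned storage function $\lambda_{\vect\theta^\star}$ is lower bounded by the $\mathcal{K}_\infty$ function $\alpha_1$, since by \eqref{eq:l:bar} such a lower bound is precisely equivalent to the dissipation inequality \eqref{eq:dissip:l} with $\rho=\alpha_1$. The identity I would drive the whole argument toward is
\begin{align}\label{eq:plan:key}
\bar\ell_{\vect\theta^\star}(\vect s,\vect a)=\hat\ell_{\vect\theta^\star}(\vect s,\vect a)+\Psi^{N-1}_{\vect\theta^\star}(\vect f(\vect s,\vect a))-\Psi^{N}_{\vect\theta^\star}(\vect f(\vect s,\vect a)),
\end{align}
which relates the infinite-horizon object $\bar\ell_{\vect\theta^\star}$ to the MPC stage cost $\hat\ell_{\vect\theta^\star}$ plus a difference of finite-horizon values whose sign I can control.

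To obtain \eqref{eq:plan:key}, I would first pull the state-only term out of the MPC cost \eqref{eq:MPC:cost}, since it is independent of the decision variables, to get the decompositions $V_{\vect\theta}(\vect s)=-\lambda_{\vect\theta}(\vect s)+\Psi^{N}_{\vect\theta}(\vect s)$ and $Q_{\vect\theta}(\vect s,\vect a)=-\lambda_{\vect\theta}(\vect s)+\Lambda^{N}_{\vect\theta}(\vect s,\vect a)$ in terms of the quantities introduced in \eqref{eq:phi:2}. Evaluating at $\vect\theta^\star$ and invoking the rich-parameterization assumption \eqref{eq:rich} to substitute $Q_{\vect\theta^\star}=Q^\star$ and $V_{\vect\theta^\star}=V^\star$, I would plug these decompositions into the optimality relation \eqref{eq:Q:optimal}, namely $Q^\star(\vect s,\vect a)=\ell(\vect s,\vect a)+V^\star(\vect f(\vect s,\vect a))$, and expand the action-value side using \eqref{eq:QV}. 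After cancelling the $\lambda_{\vect\theta^\star}(\vect s)$ terms and collecting the $\Psi$ terms, the remaining right-hand side reassembles exactly into the definition of $\bar\ell_{\vect\theta^\star}$ given in \eqref{eq:rotated:L:theta}, which yields \eqref{eq:plan:key}.

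The concluding step is to sign the correction in \eqref{eq:plan:key}. By the monotonicity Lemma \eqref{eq:mono}, lengthening the horizon cannot increase the finite-horizon value, so $\Psi^{N}_{\vect\theta^\star}(\vect f(\vect s,\vect a))\leq\Psi^{N-1}_{\vect\theta^\star}(\vect f(\vect s,\vect a))$ and the bracketed difference in \eqref{eq:plan:key} is nonnegative. This gives $\bar\ell_{\vect\theta^\star}(\vect s,\vect a)\geq\hat\ell_{\vect\theta^\star}(\vect s,\vect a)$, and the hypothesis $\alpha_1(\|\vect s-\vect s_e\|)\leq\hat\ell_{\vect\theta^\star}(\vect s,\vect a)$ then propagates through to $\alpha_1(\|\vect s-\vect s_e\|)\leq\bar\ell_{\vect\theta^\star}(\vect s,\vect a)$ for every $(\vect s,\vect a)\in\mathbb{Z}$, which is the dissipativity condition \eqref{eq:dissip:l} and certifies $\lambda_{\vect\theta^\star}$ as a valid storage function.

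I expect the main obstacle to be the careful index bookkeeping in deriving \eqref{eq:plan:key}: the successor-state value $\Psi^{N-1}_{\vect\theta^\star}(\vect f(\vect s,\vect a))$ entering through \eqref{eq:QV} on the action-value side must be tracked against the $\Psi^{N}_{\vect\theta^\star}(\vect f(\vect s,\vect a))$ entering through the decomposition of $V^\star$ at the same successor state, and it is exactly this one-step horizon mismatch that generates the nonnegative term the monotonicity Lemma is designed to dominate. Assumption \ref{assum:T} enters only implicitly, as it is the hypothesis guaranteeing \eqref{eq:mono}; without it the correction term could change sign and the lower bound would fail.
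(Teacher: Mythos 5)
Your proposal is correct and follows essentially the same route as the paper: both arguments equate the two decompositions of $Q^\star$ obtained from \eqref{eq:rich}, \eqref{eq:Q:optimal} and \eqref{eq:QV}, and both invoke the monotonicity property \eqref{eq:mono} to absorb the one-step horizon mismatch between $\Psi^{N-1}_{\vect\theta^\star}$ and $\Psi^{N}_{\vect\theta^\star}$ at the successor state. The only difference is presentational --- you package the algebra as the identity $\bar\ell_{\vect\theta^\star}=\hat\ell_{\vect\theta^\star}+\Psi^{N-1}_{\vect\theta^\star}(\vect f(\vect s,\vect a))-\Psi^{N}_{\vect\theta^\star}(\vect f(\vect s,\vect a))$ before signing the correction, whereas the paper applies \eqref{eq:mono} inside the chain of (in)equalities and then subtracts.
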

\begin{proof}
Using  \eqref{eq:rich:Q} and \eqref{eq:Q}, we have:
\begin{align}\label{eq:QV1}
    Q^\star(\vect s,\vect a)&=Q_{\vect\theta^\star}(\vect s,\vect a)
    =-\lambda_{\vect\theta^\star}(\vect s)+\Lambda^N_{\vect\theta^\star}(\vect s,\vect a) \nonumber\\
    &=-\lambda_{\vect\theta^\star}(\vect s)+\hat\ell_{\vect\theta^\star}(\vect s,\vect a)+\Psi^{N-1}_{\vect\theta^\star}(\vect f(\vect s,\vect a))
\end{align}
Moreover, from \eqref{eq:Q:optimal},\eqref{eq:rich:V} and \eqref{eq:V}, we have:
\begin{align}\label{eq:QV3}
    Q^\star(\vect s,\vect a)&=\ell(\vect s,\vect a)+ V^\star(\vect f(\vect s,\vect a))=\ell(\vect s,\vect a)+ V_{\vect\theta^\star}(\vect f(\vect s,\vect a))\nonumber\\&=\ell(\vect s,\vect a)-\lambda_{\vect\theta^\star}(\vect f(\vect s,\vect a))+\Psi^N_{\vect\theta^\star}(\vect f(\vect s,\vect a))\nonumber\\
    &\leq \ell(\vect s,\vect a)-\lambda_{\vect\theta^\star}(\vect f(\vect s,\vect a))+\Psi^{N-1}_{\vect\theta^\star}(\vect f(\vect s,\vect a))
\end{align}
where we used \eqref{eq:mono} in the last inequality. From \eqref{eq:QV1} and \eqref{eq:QV3}, we have:
\begin{align}
    \lambda_{\vect\theta^\star} (\vect f(\vect s, \vect a)) - \lambda_{\vect\theta^\star} (\vect s) 
    &\leq -\hat \ell_{\vect\theta^\star}(\vect s, \vect a)+ \ell(\vect s, \vect a) \nonumber\\&\leq -\alpha_1(\|\vect s-\vect s_e\|)+\ell(\vect s, \vect a)\qedhere
\end{align}
\end{proof}
In order to satisfy \eqref{eq:hat:l}, we generalize it to hold for all parameters set $\vect\theta$, i.e:
\begin{align}
       \alpha_1(\|\vect s-\vect s_e\|)&\leq \hat \ell_{\vect\theta}(\vect s,\vect a),\quad \forall (\vect s,\vect a)\in\mathbb{Z}\label{eq:hat:l}
\end{align}
Under assumption \ref{assum:T}, if the inequality \eqref{eq:hat:l} holds, MPC-scheme \eqref{eq:V} is stable (\cite{MPCbook}), i.e. a dissipative problem has an equivalent tracking MPC which is stable. Then the policies converge to the cost-free steady-state point ($\ell(\vect s_e,\vect a_e)=0$) as time goes to infinity. Under assumption \ref{assum1} and \ref{assum:rich}, the undiscounted Q-learning \eqref{eq:Qlearning} converges to its fixed-point for a dissipative problem in the set where the action-value function is bounded~(see \cite{tsitsiklis1994asynchronous}).
From Theorem \ref{theorem}, the inequality \eqref{eq:hat:l} yields the dissipativity inequality \eqref{eq:dissip:l}. Unfortunately, Q-learning steps delivered by \eqref{eq:Qlearning} do not necessarily respect
the requirement of the inequality \eqref{eq:hat:l} for stage cost $\hat \ell_{\vect\theta}$. We discuss briefly here two solutions to satisfy \eqref{eq:hat:l} and each solution will be extended in future works.

1) Parametrize the stage cost $\hat \ell_{\vect\theta}$ such that be zero at steady-state and strictly positive otherwise using a positive output function approximator. E.g., for a quadratic approximator, one can use the following function approximator:
\begin{align}
    \hat \ell_{\vect\theta}(\vect s,\vect a)=&[(\vect s-\vect s_e)^\top,(\vect a-\vect a_e)^\top]\big(\vect M(\vect \theta)\vect M^\top(\vect \theta)+\nonumber\\&\qquad\qquad\qquad\epsilon \vect I\big)[(\vect s-\vect s_e)^\top,(\vect a-\vect a_e)^\top]^\top
\end{align}
where $\vect M$ is an arbitrary parameters matrix and $\epsilon$ is a small positive constant. In order to provide a more generic positive stage cost, one can use a positive activation function in the output layer of a neural network approximator.

2) Using SOS method for the polynomial approximator of the stage cost $\hat \ell_{\vect\theta}$. In this case, the positivity of the stage cost can be represented as the following linear matrix inequality:
\begin{align}\label{eq:R}
    0\prec  \vect R(\vect \theta_{k}), \quad \forall k\geq0
\end{align}
where
\begin{align}
    \hat \ell_{\vect\theta}(\vect s,\vect a)=\vect m^\top(\vect s-\vect s_e,\vect a-\vect a_e)\vect R(\vect \theta_{k})\vect m(\vect s-\vect s_e,\vect a-\vect a_e)
\end{align}
and where $\vect m$ is a vector of monimials of $\vect s-\vect s_e$ and $\vect a-\vect a_e$ without bias (constant) value to preserve to be zero at the steady-state point. In this case, the update rule \eqref{eq:Qlearning} can be seen as minimizing a quadratic cost subject to some constraints in the parameters taking the form of a Semi-definite program (SDP):
\begin{subequations}\label{eq:const:Q:learning}
\begin{align}
    \min_{\Delta\vect\theta}&\quad \frac{1}{2} \|\Delta\vect\theta\|^2-\alpha\delta_{k}\nabla^\top_{\vect \theta}Q_{\vect \theta}(\vect s_{k},\vect a_k)\Delta\vect\theta\\
    \mathrm{s.t.} &\quad 0\prec \vect R(\Delta\vect\theta+\vect\theta_k) 
\end{align}
\end{subequations}
where $\Delta\vect\theta:=\vect\theta_{k+1}-\vect\theta_k$. One can observe that $\vect\theta_k$ will satisfy \eqref{eq:R}.
Note that using SOS method in this method has less complexity and more accuracy in the storage function rather than solving \eqref{eq:dissip:l} directly using SOS method.
Indeed, to solve dissipativity \eqref{eq:dissip:l} using SOS one needs to approximate the exact stage cost $\ell$ and dynamics by polynomials and the obtained storage function is based of these approximations and may have a bias with the exact storage function (\cite{pirkelmann2019approximate}). However, to solve \eqref{eq:const:Q:learning}, one only needs to provide a SOS MPC stage cost $\hat \ell_{\vect\theta}$. Using a generic approximator for the storage function and terminal cost, Q-learning will be able to find the best polynomial stage cost and capture the optimal action-value function. Indeed, the storage function is used as a generic approximator, and $\ell$ and dynamics do not need to be approximated by the polynomial.
\par In practice, if the parametrization is not rich (assumption \ref{assum:rich} does not hold), $\delta$ will not converge to zero necessarily. In this case, Q-learning will find the optimal parameters among the provided functions family. Then we will check the dissipativity inequality \eqref{eq:dissip:l} for the optimal parameters. If it holds, then the problem is dissipative, otherwise the algorithm is inconclusive, i.e. the parametrization is not rich enough or the problem is non-dissipative. The proposed approach is summarized in the Algorithm \ref{Alg}.
\begin{algorithm2e}[ht!]
	\caption{Q-learning to evaluate storage function and verify the dissipativity}
	\label{Alg}
	\SetKwInOut{Input}{Input}
	\Input{Parameterize $\lambda_{\vect\theta}$, $T_{\vect\theta}$, and $\hat \ell_{\vect\theta}$ \\
	Initialize $\vect\theta_0$}
	\While{$\|\delta\|>\epsilon$}{
		Update $\vect\theta$ from \eqref{eq:const:Q:learning} if $\hat \ell_{\vect\theta}$ is SOS (or from \eqref{eq:Qlearning} if $\hat \ell_{\vect\theta}$ is positive definite by construction) \\}
		\If{Converge}{Compute rotated storage function $\bar\ell_{\vect\theta}$ from learned storage function in \eqref{eq:rotated:L:theta} \\
			\If{$\bar\ell_{\vect\theta}>\epsilon\|\vect s-\vect s_e\|^2$ for small enough $\epsilon>0$}
			{System is dissipative and $\lambda_{\vect\theta}$ is a valid storage function}
			\Else{Inconclusive}}
		\Else{Inconclusive}
\end{algorithm2e} 
\section{Simulation}\label{sec:sim}
In this section, we provide four numerical examples in order to illustrate the efficiency of the proposed method.
\subsection{LQR case}
Firs, we look at a trivial Linear dynamics, Quadratic stage cost, regulator (LQR) problem with non-positive stage cost. LQR is a well-known problem, because the exact optimal policy, value function and stage cost are obtainable using other techniques, e.g., Riccati equation. Consider the following linear dynamics with a quadratic economic stage cost:
\begin{equation}
s_{k+1}=2s_{k}-a_{k}\,,\quad L(s,a)=s^{2}+a^{2}+4sa \ngeqslant 0\label{eq}
\end{equation}
The optimal steady state is $(s_{s},a_{s})=(0,0)$. Using Riccati equation, the optimal value function and policy read as:
\begin{align}
    V^\star(s)=11.7446s^2\,,\quad \pi^\star(s)=1.6861s
\end{align}
The storage function $\lambda_{\vect\theta}$, terminal cost $T_{\vect\theta}$ and stage cost approximations $\hat\ell_{\vect\theta}$ are selected as follows: 
\begin{subequations}
\begin{align}\label{eq:MPC:para:sim}
\lambda_{\vect\theta} (s)&=\theta_1s^2 \,,\quad
T_{\vect\theta}(s)= \theta_2 s^2\\
\hat\ell_{\vect\theta}(s)&=\begin{bmatrix}
s\\a 
\end{bmatrix}^\top \begin{bmatrix}
\theta_3 &\theta_4\\ 
\theta_5&\theta_6
\end{bmatrix}
\begin{bmatrix}
s\\a 
\end{bmatrix}
\end{align}
\end{subequations}
where $\vect\theta=\{\theta_1,\ldots,\theta_6\}$ is the set of  parameters adjusted by Q-learning. The RL steps are restricted to making $T_{\vect\theta}$ positive definite. We initialize $\vect\theta_0=[0.1,1,1,0,0,0]^\top$. Fig. \ref{fig:1} shows the convergence of the parameters resulting from Q-learning during $50$ episodes. As can be seen in Fig.\ref{fig:2}, after $50$ iterations Q-learning is able to capture the optimal value and optimal policy functions.
\begin{figure}[ht!]
\centering
\includegraphics[width=0.48\textwidth]{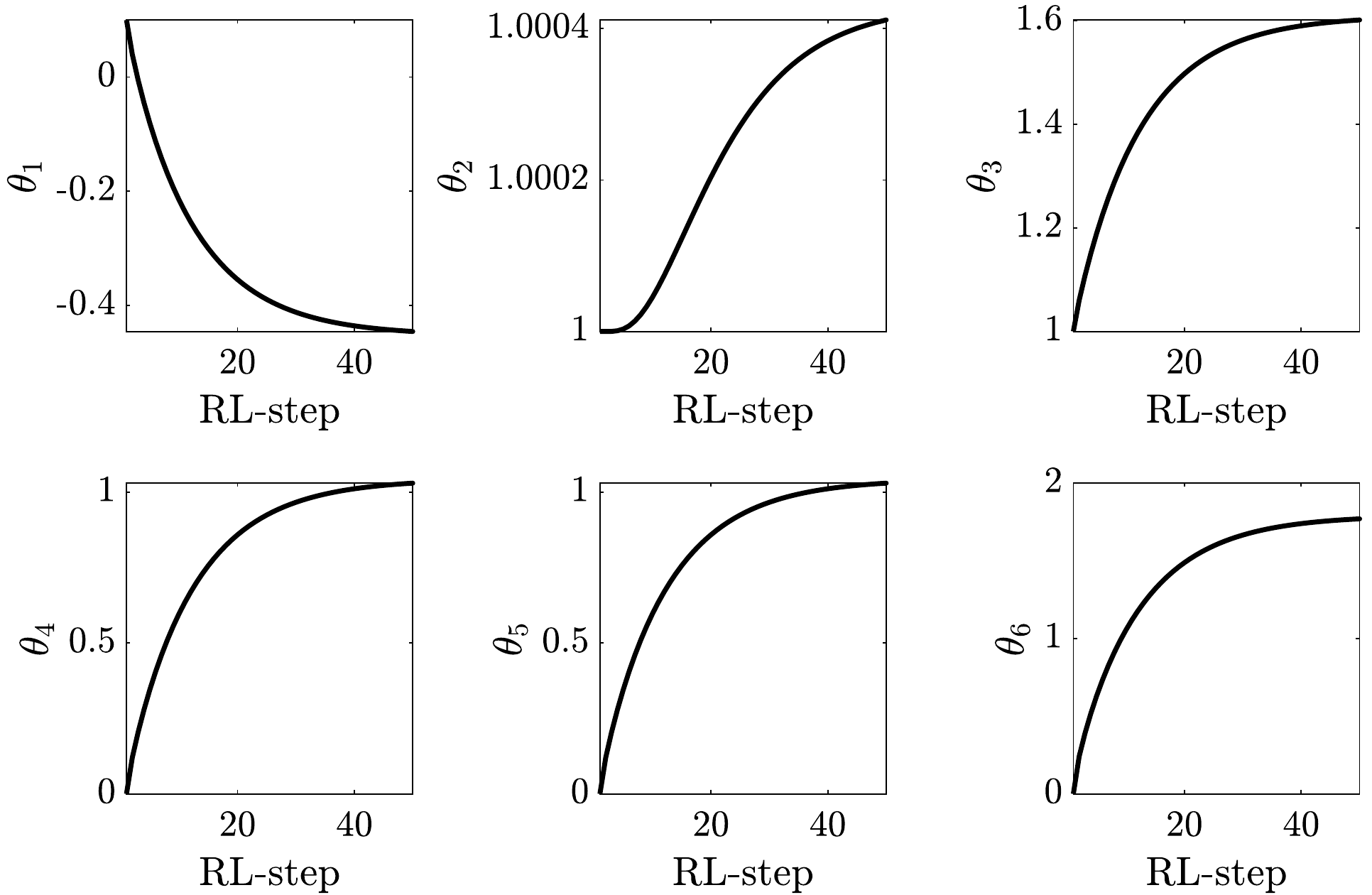}
\caption{Parameters update by Q-learning}
\label{fig:1}
\end{figure}
\begin{figure}[ht!]
\centering
\includegraphics[width=0.48\textwidth]{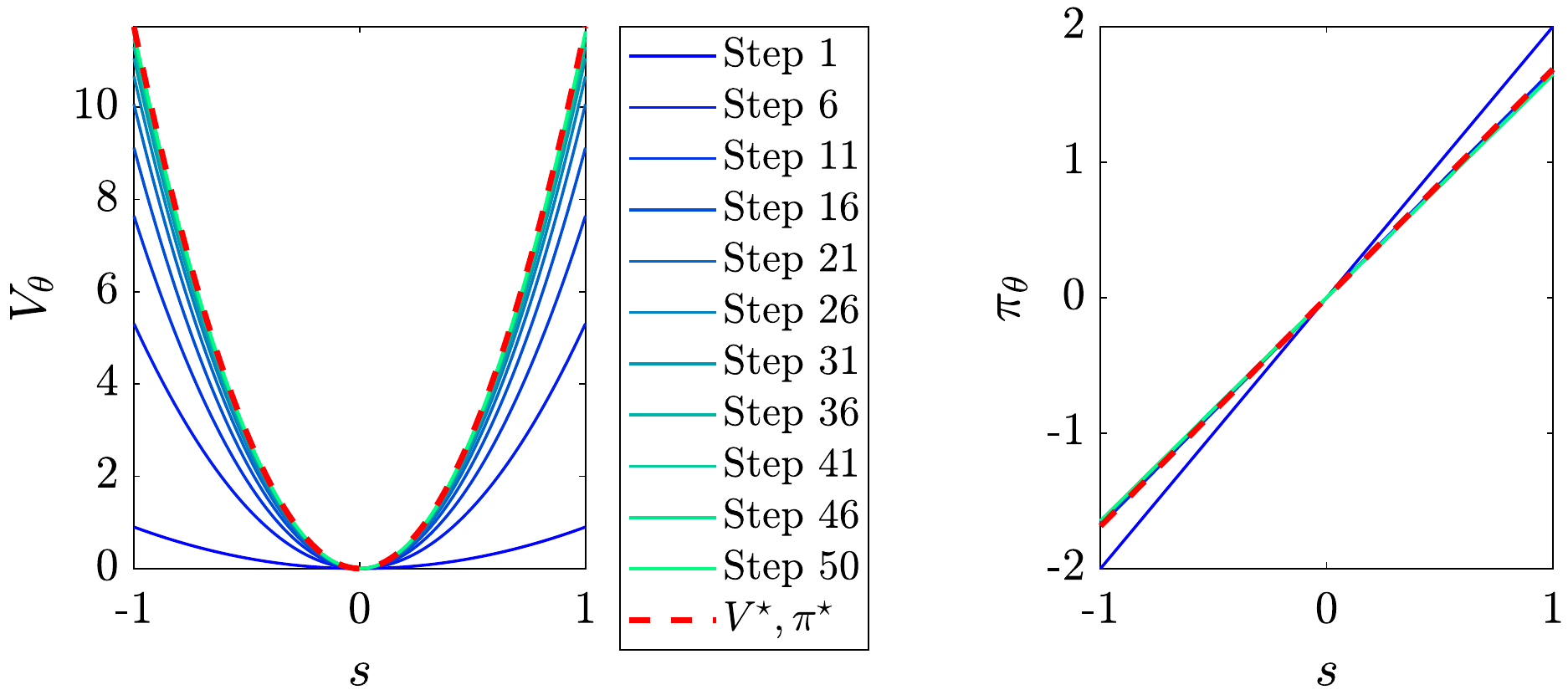}
\caption{(Left) Value function and (Right) Policy function during the learning}
\label{fig:2}
\end{figure}
The learned storage function is $\lambda_{\vect\theta}=-0.4456s^2$. From \eqref{eq:rotated:L:theta} the learned rotated stage cost satisfies the dissipativity inequality:
\begin{align}
  \bar\ell_{\vect\theta}(s,a)=s^2+ a^2+4sa &-0.4456 s^2 \nonumber\\ &+0.4456 (2s- a)^2\geq\rho s^2
\end{align}
for $0<\rho\leq2.3286$.
\subsection{Non-dissipative dynamics}\label{nondis:ex}
We provide next an example of non-dissipative dynamics and stage cost. Consider the following dynamics:
\begin{align}\label{eq:ex:2}
        s_{k+1}=a_k \,,\quad     L(s,a)=-2s^2+a^2+sa+s^2a^2
\end{align}
the optimal steady-state is  $s_e=a_e=0$. In the Appendix, it is shown that there is no storage function for this example, i.e., \eqref{eq:ex:2} is non-dissipative. A quadratic stage cost, terminal cost, and storage function with adjustable parameters are used for the simulation. Fig. \ref{fig:22} shows the learned rotated sage cost after convergence. It can be seen that  Q-learning does not manage to learn a positive definite rotated stage cost.
\begin{figure}[ht!]
\centering
\includegraphics[width=0.48\textwidth]{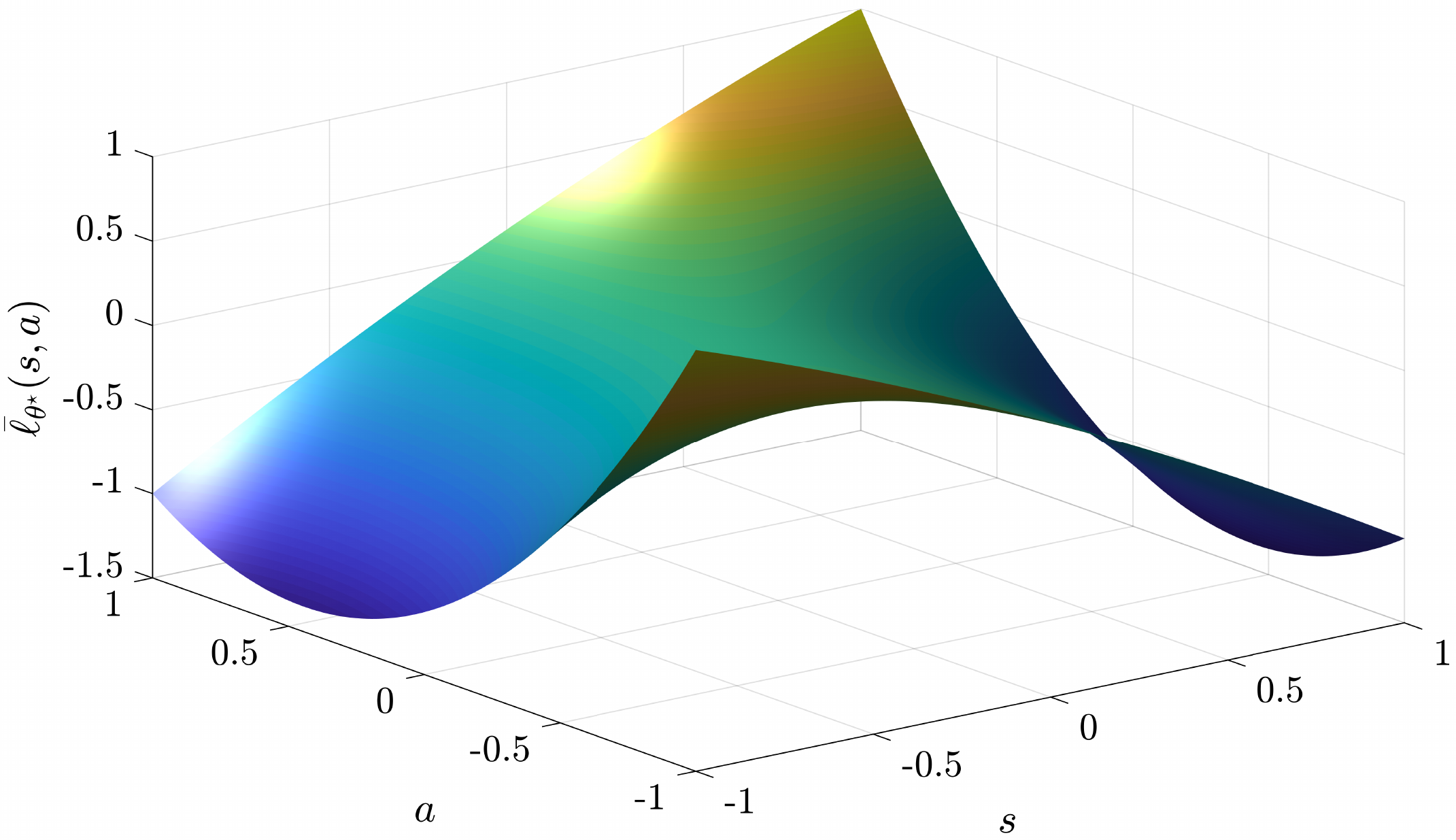}
\caption{Rotated stage cost according to the learned storage function}
\label{fig:22}
\end{figure}
\subsection{Non-polynomial case}
Consider the following dynamics with a non-polynomial economic stage cost:
\begin{align}\label{eq:non:pol}
    s_{k+1}=a_k \,,L(s,a)=-\ln(5s^{0.34}-a),\, 
\end{align}
and
\begin{align}
    \mathbb{X}=[0,10],\,\mathbb{U}=[0.01,10]
\end{align}
This model is a benchmark optimal investment problem where $s$ denotes the investment in a company and the term $5s^{0.34}$ is the return from this investment after one period. Then $5s^{0.34}-a$ is the amount of money that can be used for consumption in the current time period. Then the objective is to maximize the sum of logarithmic utility function. The optimal steady-state point is $s_e=a_e=2.2344$. It is shown in \cite{pirkelmann2019approximate} that a storage function in the form
\begin{align}
   \lambda_{\theta}(s)=\theta(s-s_e),
\end{align}
is valid for this problem and the Sum-of-Squares (SOS) approach delivers an approximated storage function ($\theta=0.23$) using an order-3 Taylor approximation for the stage cost. The analytical solution can be obtained for $\theta$ (see~\cite{pirkelmann2019approximate}). We use the approximated storage function obtained from SOS as an initial guess for $\lambda_{\vect\theta}$ and apply the proposed learning method. The following stage cost is used in the simulation:
\begin{align}
\hat\ell_{\theta}(s,a) = & \bar\ell_{\theta}(s,a) =\\&L(s,a)-L(s_e,a_e)+\theta(s-s_e)-\theta(a-s_e),\nonumber
\end{align}
and we use a long horizon $N=100$ without terminal cost. Fig. \ref{fig:25} shows the update of the parameter $\theta$ over the $100$ episodes. As can be seen, the learning-based storage function converges to the analytical solution, while the parameter $\theta$ resulting from SOS method has a bias. Note that this example is simple and the bias issue is not significant, but for a more complex problem in practice the SOS method may lead a storage function that has more bias with the true storage function. Fig. \ref{fig:256} illustrates the rotated stage cost from the learned storage function.
\begin{figure}[ht!]
\centering
\includegraphics[width=0.48\textwidth]{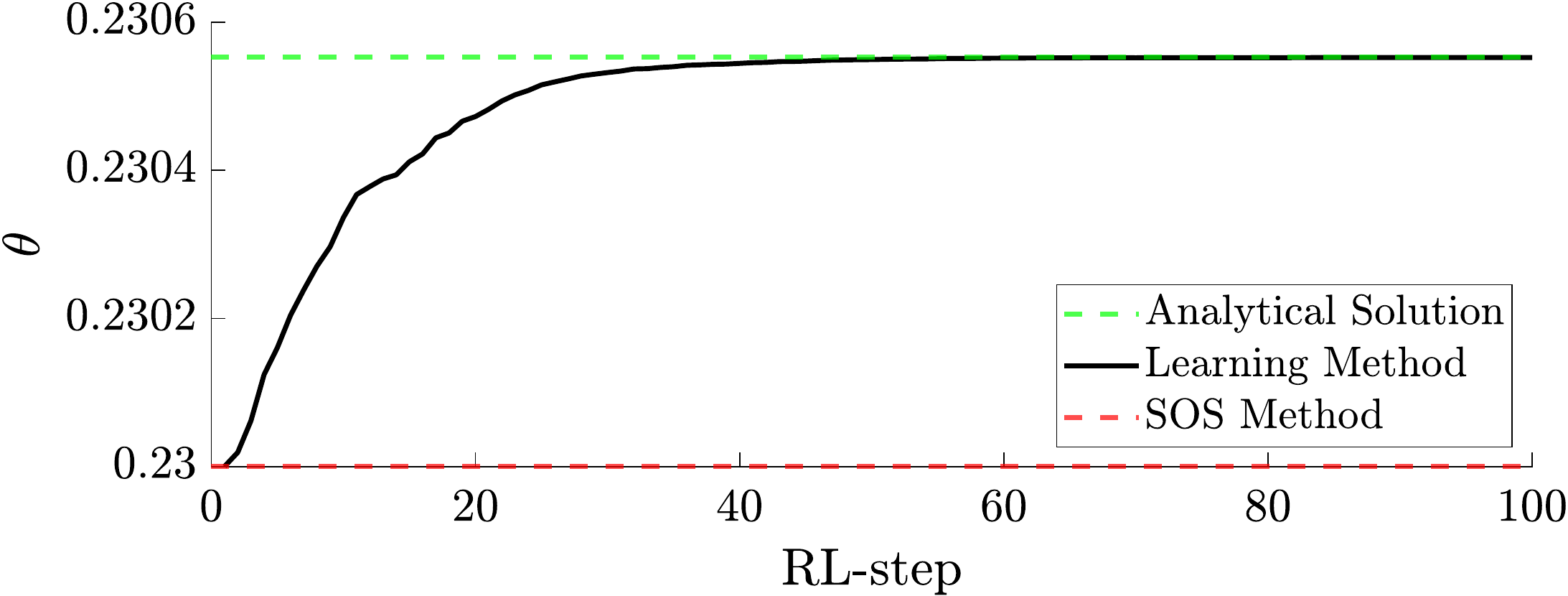}
\caption{Learning of the storage function parameter}
\label{fig:25}
\end{figure}
\begin{figure}[ht!]
\centering
\includegraphics[width=0.48\textwidth]{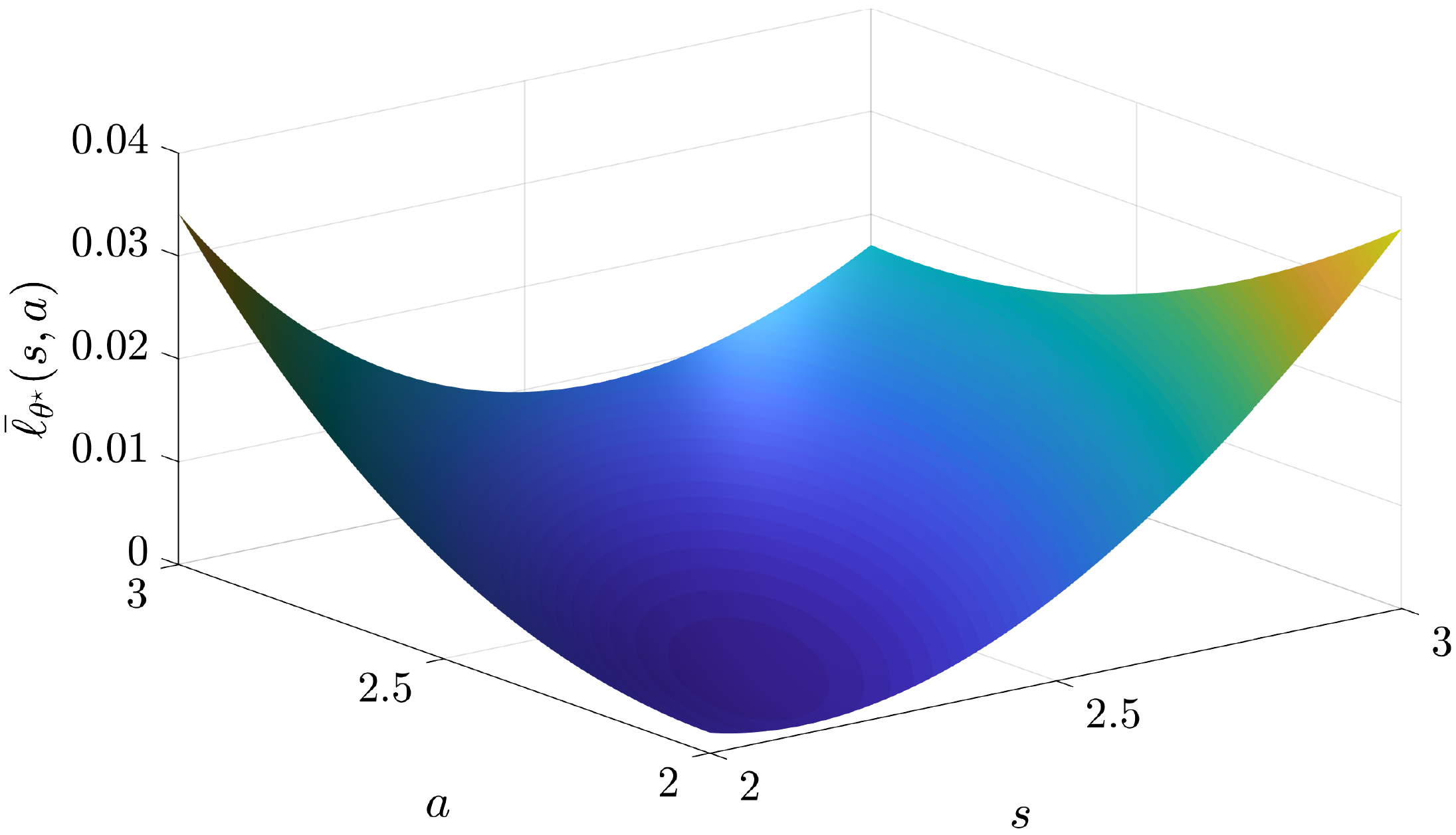}
\caption{Rotated stage cost according to the learned storage function}
\label{fig:256}
\end{figure}
\section{CONCLUSION}\label{sec:conc}
This paper presented the use of undiscounted Q-learning to evaluate the storage function and verify the dissipativity in the ENMPC-scheme for general discrete-time dynamics and cost subject to input-state constraint. We parametrized the equivalent tracking MPC-scheme and delivered it to the RL as a function approximator. We showed that if the parameterization is rich enough, Q-learning will be able to deliver a valid storage function for a dissipative problem. The efficiency of the method was illustrated in the different case studies. Combining this method with SOS, applying for the noisy data and stochastic systems can be considered in future research.
\bibliography{nmpc}   
\appendix
    \section*{Appendix. Non-dissipativity of the Example \ref{nondis:ex}}\label{appen}
Let us assume that the system-stage cost described in \eqref{eq:ex:2} is dissipative. Then there exists a storage function $\lambda(s)$ and $0<\rho$ that satisfy the following inequality:
\setcounter{equation}{0}
\renewcommand{\theequation}{A.\arabic{equation}}
\begin{equation}\label{eq:app1}
        \lambda(a)-\lambda(s)\leq -2s^2+a^2+sa+s^2a^2-\rho s^2,
\end{equation}
and  $\lambda(0)=0$. For $s=0$ we have:
\begin{align}\label{eq:cont:1}
    \lambda(a)\leq a^2 \Rightarrow \lambda(s)\leq s^2,
\end{align}
where we changed the variable name $a\rightarrow s$ in the last inequality. For $a=0$, \eqref{eq:app1} reads as:
\begin{align}\label{eq:cont:2}
    -\lambda(s)\leq -2s^2-\rho s^2 \xRightarrow{0<\rho} 2s^2\leq\lambda(s),
\end{align}
the contradiction from \eqref{eq:cont:1} and \eqref{eq:cont:2} shows that the storage function does not exist and the system is non-dissipative.
\end{document}